\newtheorem{theorem}{Theorem}
\newtheorem{corollary}{Corollary}
\newcommand{\Expect}{{\rm I\kern-.3em E}}
\begin{document}

\title{Coded Distributed Computing with Node Cooperation Substantially Increases Speedup Factors}

\author{%
  \IEEEauthorblockN{Emanuele Parrinello}
  \IEEEauthorblockA{EURECOM\\
                    Sophia Antipolis, France\\
                    Email: parrinel@eurecom.fr}
  \and
  \IEEEauthorblockN{Eleftherios Lampiris}
  \IEEEauthorblockA{EURECOM\\
                    Sophia Antipolis, France\\
                    Email: lampiris@eurecom.fr}
  \and
  \IEEEauthorblockN{Petros Elia}
  \IEEEauthorblockA{EURECOM\\
                    Sophia Antipolis, France\\
                    Email: elia@eurecom.fr}
}

\maketitle

\begin{abstract}
This work explores a distributed computing setting where $K$ nodes are assigned fractions (subtasks) of a computational task in order to perform the computation in parallel. In this setting, a well-known main bottleneck has been the inter-node communication cost required to parallelize the task, because unlike the computational cost which could keep decreasing as $K$ increases, the communication cost remains approximately constant, thus bounding the total speedup gains associated to having more computing nodes.  This bottleneck was substantially ameliorated by the recent introduction of coded MapReduce techniques which allowed each node --- at the computational cost of having to preprocess approximately $t$ times more subtasks --- to reduce its communication cost by approximately $t$ times. In reality though, the associated speed up gains were severely limited by the requirement that larger $t$ and $K$ necessitated that the original task be divided into an extremely large number of subtasks. In this work we show how node cooperation, along with a novel assignment of tasks, can help to dramatically ameliorate this limitation.  The result applies to wired as well as wireless distributed computing, and it is based on the idea of having groups of nodes compute identical parallelization (mapping) tasks and then employing a here-proposed novel D2D coded caching algorithm.

\end{abstract}

\section{Introduction}

Parallel computing exploits the presence of more than one available computing node, in order to allow for faster execution of a computational task. This effort usually involves dividing the original computational task into different subtasks, and then assigning these subtasks to different nodes which, after some intermediate steps, compute the final task in parallel.

While some rare tasks are by nature already parallel, most computational problems need to be parallelized, and this usually involves an intermediate preprocessing step and a subsequent information exchange between the nodes. One such special class of distributed computing algorithms follows the MapReduce model \cite{dean2008mapreduce}, which is a parallel processing tool that simplifies the parallel execution of tasks, by abstracting the original problem into the following three phases:
\begin{enumerate}
	\item the \textit{mapping phase}, where each element of the dataset is assigned to one or more computing nodes and where the nodes perform an intermediate computation aiming to ``prepare'' for parallelization,
	\item the \textit{shuffling phase} (or communication phase), where nodes communicate between each other the preprocessed data that is needed to make the process parallel, and
	\item the \textit{reduce phase}, where nodes work in parallel to provide the final output that each is responsible for.
\end{enumerate}
Classes of tasks that can be parallelized under a MapReduce framework include Sorting \cite{o2008terabyte}, Data Analysis and Clustering \cite{shim2012mapreduce,kumar2013verification}, Word Counting \cite{dean2007distributed}, Genome Sequencing \cite{mckenna2010genome}, and others.

\subsection{Communication bottleneck of distributed computing}
While though MapReduce allows for parallelization, it also comes with different bottlenecks involving for example struggling nodes~\cite{dean2010mapreduce} and non-fine-tuned algorithms~\cite{chen2011case}. The main bottleneck though that bounds the performance of MapReduce is the duration of the shuffling phase, especially as the dataset size becomes larger and larger. While having more nodes can speed up computational time, the aforementioned information exchange often yields unchanged or even increased communication load and delays, leading to a serious bottleneck in the performance of distributed computing algorithms.

\paragraph{Phase delays}
In particular, consider a setting where there are $K$ computing nodes, operating on a dataset of size $F$.
Assuming that each element of the dataset can appear in $t$ different computing nodes, and assuming that $T_{\text{map}}(F)$ represents the time required for one node to map the entire dataset, then the map phase will have duration approximately $T_{\text{map}}(t\frac{F}{K})$ which generally reduces with $K$.  Similarly the final reduce phase enjoys the same decreased delay $T_{\text{red}}(F/K)$, where $T_{\text{red}}(F)$ denotes the time required for a single node to reduce the entire mapped dataset\footnote{We here assume for simplicity of exposition, uniformity in the amount of mapped data that each node uses in the final reduce phase. We also assume a uniformity in the computational capabilities of each node.}.

The problem lies with the communication delay $T_{\text{com}}(F)$. For $T_c$ denoting the time required to transmit the entire mapped dataset, from one node to another without any interference from the other nodes\footnote{$T_c$ accounts for the ratio between the capacity of the communication link, and the dataset size $F$.}, and accounting for a reduction by the factor $(1-\gamma)$ due to the fact that each node already has a fraction $\gamma = t/K$ of the dataset, then the delay of the shuffling phase takes the form $T_{\text{com}}(F) = T_{c}\cdot(1-\gamma)$, which does not decrease with $K$.

Hence for the basic MapReduce (MR) algorithm --- under the traditional assumption that the three phases are performed sequentially --- the overall execution time becomes
\begin{equation*}
	T_{\text{tot}}^{\text{MR}}(F,K)=T_{\text{map}}\left(\frac{t}{K} F\right)+T_{c}\cdot(1-\gamma)+T_{\text{red}}\left(\frac{F}{K}\right)
\end{equation*}
which again shows that, while the joint computational cost $T_{\text{map}}(\frac{t}{K} F)+ T_{\text{red}}(\frac{F}{K})$ of the map and reduce phases can decrease by adding more nodes, the communication time $T_c\cdot(1-\gamma)$ is not reduced and thus the cost of the shuffling phase emerges as the actual bottleneck of the entire process.

\subsection{Emergence of Coded MapReduce: exploiting redundancy}

Recently a method of reducing the aforementioned communication load was introduced in \cite{7447112} (see also \cite{7965073,li2018fundamental}), which modified the mapping phase, in order to allow for the shuffling phase to employ coded communication. The main idea of the method --- which is referred to as Coded MapReduce (CMR) --- was to assign and then force each node to map a fraction $\gamma > 1/K$ of the whole dataset (such that each element of the dataset is mapped in $t = K\gamma$ computing nodes) and then --- based on the fact that such a mapping would allow for common mapped information at the different nodes --- to eventually perform coded communication where during the shuffling phase, the packets were not sent one after the other, but were rather combined together into XORs and sent as one. The reason this speedup would work is because the recipients of these packets could use part of their (redundant) mapped packets in order to remove the interfering packets from the received XOR, and acquire their own requested packet. This allowed for communicating (during the shuffling phase) to $t = K\gamma$ nodes at a time, thus reducing the shuffling phase duration, from $T_{c}\cdot(1-\gamma)$ to $\frac{1}{t}T_{c}\cdot(1-\gamma) = \frac{1}{K\gamma}T_{c}\cdot(1-\gamma) $.

\subsection{Subpacketization bottleneck of distributed computing} Despite the fact that the aforementioned coded method promises, in theory, big delay reductions by a factor of $t = K\gamma$ compared to conventional uncoded schemes, these gains are heavily compromised by the fact that the method requires that the dataset be split into an unduly large number of $S=t\binom{K}{t}$ packets which grows exponentially in $K$ and $t$.

Specifically the fact that the finite-sized dataset can only be divided into a finite number of packets, limits the values of parameter $t$ that can be achieved, because the corresponding subpacketization $S$ must be kept below some maximum allowable subpacketization $S_{\max}$, which, also, must be less than the number of elements $F$ in the dataset. If this number $S=t\binom{K}{t}$ exceeds the maximum allowable subpacketization $S_{\max}$, then coded communication is limited to include coding that spans only
\begin{equation}
\bar{K} = \arg\max_{K}\left\{t\binom{K}{t} \leq S_{\max}\right\}
\end{equation}
nodes at a time, forcing us to repeat coded communication $K/\bar{K}$ times, thus resulting in a smaller, actual gain $$\bar{t}=\bar{K}\gamma<K\gamma$$ which can be far below the theoretical communication gain from coding.
Such high subpacketization can naturally limit the coding gains $t$, but it can also further delay the shuffling phase because --- as we will elaborate later on --- it implies more transmissions and thus higher packet overheads, as well as because smaller packets are more prone to have mapped outputs that are unevenly sized, thus requiring more zero padding.

In what follows, we will solve the above problems with a novel group-based method of distributing the dataset across the computing nodes, and a novel method of cooperation/coordination between nodes in the transmission, which will jointly yield a much reduced subpacketization, allowing for a wider range of $t$ values to be feasible, thus eventually allowing substantial reductions in the overall execution time for a large class of distributed computing algorithms.

Before describing our solution and its performance, let us first elaborate on the exact channel model.

\subsection{Channel model: Distributed computing in a D2D setting}
In terms of the communication medium, we will focus on the wireless fully-connected setting, because in the wireless setting the nature of multicasting and the impact of link bottlenecks are clearer. As we will discuss later on though, the ideas here apply directly to the wired case as well.

We assume that the $K$ computing nodes are all fully connected via a wireless shared channel as in the classical fully-connected D2D wireless network. At each point there will be a set of active receivers, and active transmitters. Assuming a set of $L$ active transmitters jointly transmitting vector $\mathbf{x}\in\mathbb{C}^{L\times 1}$, then the received signal at a receiving node $k$ takes the form
\begin{equation}
	y_{k}=\mathbf{h}^{T}_ {k}\mathbf{x}+w_{k}, ~ ~ k=1,\cdots,K
\end{equation}
where as always $\mathbf{x}$ satisfies a power constraint $\mathbb{E}(||\mathbf{x}||^{2})<P$, where $\mathbf{h}_ {k}\in \mathbb{C}^{L\times 1}$ is the (potentially random) fading channel between the transmitting set of nodes and the receiving node $k$, and where $w_{k}$ denotes the unit-power AWGN noise at receiver $k$. We assume the system to operate in the high SNR regime (high $P$), and we assume perfect channel state information (CSI) (and for the wired case, perfect network coding coefficients) at the active receivers and transmitters.

\subsection{Notation}
We will use $[K]\triangleq \{1,2,\cdots,K\}$. If $\mathcal{A}$ is a set, then $|\mathcal{A}|$ will denote its cardinality, and $\mathcal{A}(j)$ will denote its $j$th element. For sets $\mathcal{A}$ and $\mathcal{B}$, then $\mathcal{A} \backslash \mathcal{B}$ denotes the difference set. For integers $n,k$ ($n\geq k$) then $\binom{n}{k}$ will denote the binomial ($n$-choose-$k$) operator. Complex vectors will be denoted by lower-case bold font.

\section{Main result}

We proceed to describe the performance of the new proposed algorithm, which will be presented in the next section. Key to this algorithm --- which we will refer to as the Group-based Coded MapReduce (GCMR) algorithm --- is the concept of user grouping. We will group the $K$ nodes into $K/L$ groups of $L$ nodes each, and then every node in a group will be assigned the same subset of the dataset and will produce the same mapped output. By properly doing so, this will allow us to use in the shuffling phase a new --- developed in this work --- D2D coded caching communication algorithm which assigns the D2D nodes an adaptive amount of content overlap\footnote{This general idea draws from the group-based cache-placement idea developed in \cite{LE18jsacSubmitted} for the cache-aided broadcast channel.}. This will in turn substantially reduce the required subpacketization, thus substantially boosting the speedup in communication and overall execution time.

For the sake of comparison, let us first recall that under the subpacketization constraint $S_{\max}$, the original Coded MapReduce approach achieves communication delay
\begin{equation}
T^{\text{CMR}}_{com}=\frac{1-\gamma}{\bar{t}}T_{c}
\end{equation}
where
\[ \bar{t} = \gamma \cdot \arg\max_{K}\{  K\gamma \binom{K}{K\gamma} \leq S_{\max}\}\] is the maximum achievable effective speedup (due to coding) in the shuffling phase. In the above and in what follows, we assume for simplicity that $Q=K$ such that each node has one final task.

We proceed with the main result.

\begin{theorem}
In the $K$-node distributed computing setting where the dataset can only be split into at most $S_{max}$ identically sized packets, the proposed Group-based Coded MapReduce algorithm with groups of $L$ users, can achieve communication delay \[T^{GCMR}_{com}=\frac{1-\gamma}{\bar{t}_{L}}T_{c}\] for \[\bar{t}_{L} = \gamma \cdot \arg\max_{K}\{\frac{K\gamma}{L}\binom{K/L}{K\gamma/L}\leq S_{\max}\}.\]
\end{theorem}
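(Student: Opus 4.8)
The plan is to exhibit an explicit GCMR scheme, bound its subpacketization, and compute its shuffling delay directly. Assume first that $S_{\max}\ge \tfrac{K\gamma}{L}\binom{K/L}{K\gamma/L}$, so the whole node set can be used in a single coded shuffle; otherwise one restricts, exactly as for CMR, the coded communication to span the largest admissible node-set size --- the one singled out by the definition of $\bar t_L$ --- so that $t$ is replaced throughout by $\bar t_L$. Partition the $K$ nodes into $K_g\triangleq K/L$ groups $\mathcal{G}_1,\dots,\mathcal{G}_{K_g}$ of $L$ nodes each, and set the group-level redundancy $t_g\triangleq t/L=K\gamma/L$ (assumed integral, as is $K_g$). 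I would split the dataset into $\binom{K_g}{t_g}$ subfiles $\{W_{\mathcal{T}}\}$ indexed by the $t_g$-subsets $\mathcal{T}\subseteq[K_g]$, each further cut into $t_g$ equal micro-parts $\{W_{\mathcal{T}}^{(p)}:p\in\mathcal{T}\}$, for a subpacketization of $t_g\binom{K_g}{t_g}=\tfrac{K\gamma}{L}\binom{K/L}{K\gamma/L}$. In the mapping phase every node of $\mathcal{G}_i$ maps exactly the subfiles $W_{\mathcal{T}}$ with $i\in\mathcal{T}$, so it maps a fraction $\binom{K_g-1}{t_g-1}/\binom{K_g}{t_g}=t_g/K_g=\gamma$ of the dataset and --- the key structural point --- all $L$ nodes of one group produce identical mapped output.

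I would then specify the new D2D shuffling algorithm. For every $(t_g{+}1)$-subset $\mathcal{S}\subseteq[K_g]$ and every ``transmitter group'' $p\in\mathcal{S}$, the $L$ nodes of $\mathcal{G}_p$, sharing the same mapped data, act jointly as an $L$-antenna transmitter. For each receiving group $g\in\mathcal{S}\setminus\{p\}$ put $\mathcal{T}_g\triangleq\mathcal{S}\setminus\{g\}$; since $p\in\mathcal{T}_g$, group $\mathcal{G}_p$ has mapped $W_{\mathcal{T}_g}$ and can form, for every node $k\in\mathcal{G}_g$, the intermediate values $v^{(p)}_{k,\mathcal{T}_g}$ that reducer $k$ needs from $W^{(p)}_{\mathcal{T}_g}$ but does not itself hold. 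Group $\mathcal{G}_p$ transmits the single zero-forcing-precoded vector symbol
\[
\mathbf{x}=\sum_{g\in\mathcal{S}\setminus\{p\}}\mathbf{H}_{g,p}^{-1}\,\mathbf{v}_{g},\qquad \mathbf{v}_{g}=\big(v^{(p)}_{k,\mathcal{T}_g}\big)_{k\in\mathcal{G}_g}\in\mathbb{C}^{L\times1},
\]
where $\mathbf{H}_{g,p}\in\mathbb{C}^{L\times L}$ is the (generically invertible, power-normalized) channel from the antennas of $\mathcal{G}_p$ to the nodes of $\mathcal{G}_g$.

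The heart of the proof is decodability. Stacking the received signals of $\mathcal{G}_g$ gives $\mathbf{H}_{g,p}\mathbf{x}+\text{noise}=\mathbf{v}_{g}+\sum_{g'\neq g}\mathbf{H}_{g,p}\mathbf{H}_{g',p}^{-1}\mathbf{v}_{g'}+\text{noise}$: node $k\in\mathcal{G}_g$ thus receives its wanted symbol $v^{(p)}_{k,\mathcal{T}_g}$ with the $L{-}1$ competing intra-group streams nulled by the precoder, plus cross-group terms in $\mathbf{v}_{g'}$ for $g'\neq g$. For every such $g'$ one has $g\in\mathcal{T}_{g'}$, so node $k$ has mapped $W^{(p)}_{\mathcal{T}_{g'}}$, can recompute $\mathbf{v}_{g'}$ completely, and --- using its CSI --- subtract those terms exactly; hence a single symbol delivers one micro-part's worth of intermediate data to each of the $t=L t_g$ nodes of $\mathcal{S}\setminus\{p\}$. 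Over all $\mathcal{S}$ and all $p$ there are $\binom{K_g}{t_g+1}(t_g+1)=\binom{K_g}{t_g}\tfrac{K-t}{L}$ symbols, each of duration equal to the time to transmit one reducer's output from one micro-part, namely $\tfrac{1}{K}\cdot\tfrac{1}{t_g\binom{K_g}{t_g}}T_c$; multiplying and using $Lt_g=t$ gives $T^{GCMR}_{com}=\tfrac{K-t}{tK}T_c=\tfrac{1-\gamma}{t}T_c$, and a routine bookkeeping check confirms that every reducer has received all the intermediate values it was missing, so the reduce phase proceeds. With the subpacketization constraint reinstated, $t$ becomes $\bar t_L=\gamma\cdot\arg\max_K\{\tfrac{K\gamma}{L}\binom{K/L}{K\gamma/L}\le S_{\max}\}$, which is the claim.

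The main obstacle is making the decodability step airtight \emph{simultaneously across all $t_g$ receiving groups}: one must check that the coded-caching cancellation (removing inter-group interference) and the zero-forcing (removing intra-group interference) never conflict --- in particular that no receiver is ever asked for a micro-part it did not map and that the relevant $L\times L$ channel submatrices are jointly invertible almost surely --- and, for the wired setting, that the same goes through with random linear network coding replacing the beamformers. Secondary points are the divisibility assumptions $L\mid K$ and $L\mid K\gamma$ (and appropriate rounding otherwise), the precise per-symbol payload accounting that produces the constant $\tfrac{1-\gamma}{t}$, and verifying that the uniform-output assumptions in the paper's footnotes are exactly what make that accounting exact.
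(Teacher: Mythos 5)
Your proposal is correct and follows essentially the same route as the paper: the identical group partition, the $(\tau,\sigma)$-style subpacketization $\frac{K\gamma}{L}\binom{K/L}{K\gamma/L}$, group-replicated mapping, and the per-$(t_g{+}1)$-subset zero-forcing-precoded multicast with inter-group interference cancelled from locally mapped data and intra-group interference nulled by the precoder. Your explicit symbol count and per-symbol payload accounting (yielding $\frac{1-\gamma}{t}T_c$, then $\frac{1-\gamma}{\bar t_L}T_c$ under the constraint) is in fact somewhat more detailed than the paper's own verification, and the caveats you flag (joint invertibility, divisibility, uniformity) are exactly the points the paper leaves implicit.
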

\begin{proof}
The proof follows directly from the description of the scheme in Section~\ref{sec:scheme}.
\end{proof}

The above implies the following corollary, which reveals that in the presence of subpacketization constraints, simple node grouping can further speedup the shuffling phase by a factor of up to $L$.
\begin{corollary}
In the subpacketization-constrained regime where $S_{\max}\leq \frac{K\gamma}{L}\binom{K/L}{K\gamma/L}$, the new algorithm here allows for shuffling delay
\[T^{\text{GCMR}}_{com}=\frac{1-\gamma}{\bar{t}_{L}}T_{c} = \frac{T^{\text{CMR}}_{com}}{L} \]
which is $L$ times smaller than the delay without grouping.

\end{corollary}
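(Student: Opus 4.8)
The plan is to reduce everything to the single identity $\bar{t}_L = L\,\bar{t}$, from which the claimed delay relation follows immediately by comparing the two expressions $T^{\text{GCMR}}_{com} = \frac{1-\gamma}{\bar{t}_L}T_c$ and $T^{\text{CMR}}_{com} = \frac{1-\gamma}{\bar{t}}T_c$ at the common value of $\gamma$. First I would unfold the definitions supplied by the Theorem and the recalled CMR result: with $\bar{K}_L \triangleq \arg\max_K\{\tfrac{K\gamma}{L}\binom{K/L}{K\gamma/L}\leq S_{\max}\}$ and $\bar{K} \triangleq \arg\max_K\{K\gamma\binom{K}{K\gamma}\leq S_{\max}\}$ one has $\bar{t}_L=\gamma\bar{K}_L$ and $\bar{t}=\gamma\bar{K}$, so it suffices to prove $\bar{K}_L = L\bar{K}$. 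Note that the maximization for $\bar{K}_L$ ranges over $K$ that are multiples of $L$ with $K\gamma/L\in\mathbb{Z}$ (needed so that the number of groups $K/L$ and the per-group overlap $K\gamma/L$ are integers), i.e.\ over the set $\{LM : M\gamma\in\mathbb{Z}\}$.

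The engine of the argument is the substitution $K=LM$, under which the group-based subpacketization collapses exactly onto the classical one: $\tfrac{K\gamma}{L}\binom{K/L}{K\gamma/L} = M\gamma\binom{M}{M\gamma}$. Hence the constraint defining $\bar{K}_L$, rewritten in the variable $M$, becomes verbatim the constraint defining $\bar{K}$, so their largest feasible solutions correspond: $\bar{K}_L = L\bar{K}$. To make ``largest feasible solution'' rigorous I would observe that $M\mapsto M\gamma\binom{M}{M\gamma}$ is strictly increasing in $M$ (both the binomial coefficient and the prefactor $M\gamma$ grow with $M$), so each arg-max is a well-defined threshold and the map $M\mapsto LM$ is order-preserving. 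Combining gives $\bar{t}_L = \gamma\bar{K}_L = \gamma L\bar{K} = L\bar{t}$, hence $T^{\text{GCMR}}_{com} = \frac{1-\gamma}{\bar{t}_L}T_c = \frac{1}{L}\cdot\frac{1-\gamma}{\bar{t}}T_c = \frac{T^{\text{CMR}}_{com}}{L}$.

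Finally I would explain the role of the hypothesis $S_{\max}\leq \tfrac{K\gamma}{L}\binom{K/L}{K\gamma/L}$: it guarantees $\bar{K}_L\leq K$, i.e.\ that the coding span delivered by the (a priori unconstrained) arg-max does not exceed the number of physically available nodes, so that the effective gain is genuinely $\gamma\bar{K}_L$ rather than being capped at the full value $K\gamma$. Outside this subpacketization-limited regime GCMR would already attain the full coding gain, the arg-max would have to be clipped to $K$, and the clean factor-$L$ separation need not persist; the hypothesis is exactly what removes that edge case. I do not expect a genuine obstacle here --- the one thing demanding care is this bookkeeping (integrality of $K/L$ and $K\gamma/L$, monotonicity making the two thresholds unique, and the capping issue just mentioned), whereas the substantive content is the one-line identity $\tfrac{K\gamma}{L}\binom{K/L}{K\gamma/L} = M\gamma\binom{M}{M\gamma}$ above.
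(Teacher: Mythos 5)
Your proposal is correct and follows the same route the paper intends: the paper's proof is literally ``direct from the theorem,'' meaning exactly the comparison you carry out, namely that the substitution $K=LM$ collapses $\tfrac{K\gamma}{L}\binom{K/L}{K\gamma/L}$ onto $M\gamma\binom{M}{M\gamma}$, so the two arg-max thresholds satisfy $\bar{K}_L=L\bar{K}$ and hence $\bar{t}_L=L\bar{t}$. You simply make explicit the bookkeeping (integrality, monotonicity of the threshold, and the role of the hypothesis in preventing clipping at $K$) that the paper leaves implicit.
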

\begin{proof}
The proof is direct from the theorem.
\end{proof}

Finally the following also holds.
\begin{corollary}
When $S_{max}\geq \frac{K\gamma}{L}\binom{K/L}{K\gamma/L}$, the new algorithm allows for the unconstrained theoretical execution time\begin{equation}
T^{\text{GCMR}}_{tot}=T_{map}(\gamma F)+\frac{(1-\gamma)}{K\gamma}T_{c}+T_{red}\left(\frac{F}{K}\right).
\end{equation}
\end{corollary}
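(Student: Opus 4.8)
The plan is to reduce the statement to Theorem~1 by showing that the hypothesis $S_{\max}\geq \frac{K\gamma}{L}\binom{K/L}{K\gamma/L}$ makes the subpacketization constraint inactive, so that the effective shuffling speedup attains its physical maximum $\bar t_{L}=K\gamma$, and then to add back the map and reduce delays, which the grouping leaves untouched.

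First I would record the monotonicity fact that, with $\gamma$ and $L$ held fixed, the quantity $\frac{K'\gamma}{L}\binom{K'/L}{K'\gamma/L}$ is nondecreasing in $K'$ (over the admissible values for which $K'/L$ and $K'\gamma/L$ are integers); writing it as $k\binom{n}{k}$ with $n=K'/L$ and $k=\gamma n$, this is immediate from the growth of $\binom{n}{\gamma n}$ in $n$. Consequently, the hypothesis evaluated at the true number of nodes $K$ certifies that $K$ itself belongs to the feasible set $\{K' : \frac{K'\gamma}{L}\binom{K'/L}{K'\gamma/L}\leq S_{\max}\}$, and since one cannot code across more than the $K$ available nodes, the argmax equals $K$. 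Hence $\bar t_{L}=\gamma K=K\gamma$, and Theorem~1 gives $T^{\text{GCMR}}_{com}=\frac{1-\gamma}{K\gamma}T_{c}$, which is exactly the middle term of the claimed total.

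It remains to account for the other two phases and to note that, since map, shuffle and reduce are executed sequentially, $T^{\text{GCMR}}_{tot}=T^{\text{GCMR}}_{map}+T^{\text{GCMR}}_{com}+T^{\text{GCMR}}_{red}$. For the map phase I would appeal to the placement of Section~\ref{sec:scheme}: although all $L$ members of a group are assigned the same subset of the data and produce the same mapped output, the per-group assignment has size $\gamma F$ --- equivalently, of the $\frac{K\gamma}{L}\binom{K/L}{K\gamma/L}$ packets each group is given a fraction $\frac{K\gamma/L}{K/L}=\gamma$ --- so every node maps a $\gamma$-fraction of the dataset and $T^{\text{GCMR}}_{map}=T_{map}(\gamma F)$. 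For the reduce phase, the running assumption $Q=K$ means each node is responsible for a single output computed from a $1/K$ share of the mapped data, giving $T^{\text{GCMR}}_{red}=T_{red}(F/K)$. Summing the three terms yields the stated expression.

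The only substantive point --- everything else being monotonicity and additive bookkeeping --- is the claim that grouping does not inflate the per-node mapping load: one must verify that assigning identical subsets to the $L$ nodes of a group is consistent with each element still being mapped at exactly $t=K\gamma$ nodes and with each node mapping only a $\gamma$-fraction, i.e.\ that the group-based dataset distribution of Section~\ref{sec:scheme} spreads the (now constraint-free) number of packets $\frac{K\gamma}{L}\binom{K/L}{K\gamma/L}$ evenly, $\gamma$ of them per group. This is precisely what the scheme description in Section~\ref{sec:scheme} must establish, after which the corollary is immediate.
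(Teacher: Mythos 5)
Your proposal is correct and follows essentially the same route as the paper: the paper's own proof is simply ``direct from the theorem,'' i.e.\ observe that the hypothesis makes the subpacketization constraint inactive so that $\bar t_L=K\gamma$, and then append the map and reduce delays, which is exactly what you do (with the per-node $\gamma$-fraction mapping load verified from the assignment $\mathcal{M}_{\mathcal{G}_i}$, as in Section~\ref{sec:scheme}). Your added monotonicity remark and the explicit count $|\mathcal{M}_{\mathcal{G}_i}|/S=\gamma$ merely make explicit details the paper leaves implicit.
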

\begin{proof}
The proof is direct from the theorem.
\end{proof}

\section{Description of scheme}\label{sec:scheme}
We proceed to describe the scheme. We consider a dataset $\Phi$ consisting of $F$ elements, and a computational task that asks for $Q\geq K$ output values $u_{q}=\phi_{q}(\Phi), \ q=1,\cdots,Q$. 
The general aim is to distribute this task across the $K$ nodes, hence the dataset is split into $S$ disjoint packets $W_s, \ s=1,\cdots,S$ ($\cup_{s=1}^S W_s = \Phi$). We recall that, as is common in MapReduce, each function $\phi_{q}$ is decomposable as
\begin{equation}\label{model}
\phi_{q}(\Phi)=r_{q}(m_{q}(W_{1}),\cdots ,m_{q}(W_{S}))
\end{equation}
where the \emph{map functions} $\{m_{q}, \ q\in[Q]\}$ map packet $W_{s}$ into $Q$ intermediate values $W^{q}_{s}=m_{q}(W_{s}), \ q\in[Q]$, which are used by the reduce function $r_{q}$ to calculate the desired output value $u_{q}=r_{q}(W^{q}_{1},\cdots ,W^{q}_{S})$.

We proceed to describe the \textit{Assignment-and-Map, Shuffle} and \textit{Reduce} phases.

\subsection{Dataset assignment phase}
We split the $K$ nodes $1,2,\cdots,K$, into $K'\triangleq \frac{K}{L}$ groups
\begin{equation}
\mathcal{G}_{i}=\{i,i+K',...,i+(L-1)K'\},~~i\in[K']
\end{equation}
of $L$ nodes per group, and we split the dataset into
\begin{equation}\label{eq:subpacketization}
S=K'\gamma\binom{K'}{K'\gamma}
\end{equation}
packets, where $\gamma\in\{\frac{1}{K'},\frac{2}{K'},\cdots,1\}$ is a parameter of choice defining the redundancy factor of the mapping phase later on.
At this point, each $s = 1,2,\cdots,S$ is associated to a unique double index $\tau,\sigma$ so that the dataset can be seen as being segmented $\{W_{\tau,\sigma}, \ \tau\subseteq [K']:|\tau|=K'\gamma, \ \sigma\in\tau\}$.
Each node in group $\mathcal{G}_{i}$ is then assigned the set of packets
\begin{equation}
\mathcal{M}_{\mathcal{G}_{i}}=\{W_{\tau,\sigma}:\tau\ni i, \forall\sigma\in\tau\}
\end{equation}
and each of the $Q$ reduce functions $r_{q}$ is assigned to a given node. As noted before, for simplicity we assume that $Q=K$.

\subsection{Map Phase}
This phase consists of each node $k$ computing the map functions $m_{q}$ of all packets in $\mathcal{M}_{\mathcal{G}_{i}},\mathcal{G}_{i}\ni k$ for all $q\in[Q]$. At the end of the phase, node $k\in \mathcal{G}_{i}$ has computed the intermediate values $W^{q}_{\tau,\sigma}=m^{q}(W_{\tau,\sigma})$ for all $W_{\tau,\sigma}\in\mathcal{M}_{\mathcal{G}_{i}}$.

\subsection{Shuffle Phase}
Each node $\mathcal{G}_{i}(j)$ of group $\mathcal{G}_{i}$, must retrieve from the other nodes (except from those in $\mathcal{G}_{i}$), the intermediate values $\{W^{\mathcal{G}_{i}(j)}_{\tau,\sigma}: W_{\tau,\sigma}\notin \mathcal{M}_{\mathcal{G}_{i}}\}$ that it has not computed locally. Each node $\mathcal{G}_{i}(j)$ will thus create a set of symbols $\{x_{\mathcal{G}_{i}(j),\mathcal{Q}\setminus{\{i\}}}\}$, intended for all the nodes in groups $\mathcal{G}_{j}, j\in \mathcal{Q}\setminus{\{i\}}$ for some $\mathcal{Q} \subset [K']$ of size $|\mathcal{Q}|=K'\gamma+1$, where of course each symbol $x_{\mathcal{G}_{i}(j),\mathcal{Q}\setminus{\{i\}}}$ is a function of the intermediate values computed in the map phase. We use \[\mathbf{x}_{i,\mathcal{Q}\setminus{\{i\}}}\triangleq [x_{1,\mathcal{Q}\setminus{\{i\}}},\cdots ,x_{|\mathcal{M}_{\mathcal{G}_{i}}|,\mathcal{Q}\setminus{\{i\}}}]^{T}\] to denote the vector of symbols that are jointly created by the users in $\mathcal{G}_{i}$ and which are intended for the users in $\mathcal{G}_{j}, j\in \mathcal{Q}\setminus{\{i\}}$. Each symbol is communicated (multicasted) by the corresponding node $\mathcal{G}_{i}(j)$, to all the other nodes.
We proceed to provide the details for transmission and decoding.

\paragraph{Transmission}
For each subset $\mathcal{Q} \subset [K']$ of size $|\mathcal{Q}|=K'\gamma+1$, we sequentially pick all its elements $i\in\mathcal{Q}$ so that the users in group $\mathcal{G}_{i}$ act as a single distributed transmitter. These users in $\mathcal{G}_{i}$ construct the following vector of symbols
\begin{equation} \label{eq:transmission1}
\mathbf{x}_{i,\mathcal{Q}\setminus{\{i\}}}\!=\!\sum_{k' \in \mathcal{Q}\setminus{\{i\}}}\mathbf{H}^{-1}_{i,k'}
\begin{bmatrix}
W^{\mathcal{G}_{k'}(1)}_{Q\setminus{\{k'\}},i} ,
\cdots,
W^{\mathcal{G}_{k'}(L)}_{Q\setminus{\{k'\}},i}
\end{bmatrix}^{T}
\end{equation}
where $\mathbf{H}^{-1}_{i,k'}$ is the ZF precoding matrix for the channel $\mathbf{H}_{i,k'}\in\mathbb{C}^{L\times L}$ between transmitting group $\mathcal{G}_{i}$ and receiving group $\mathcal{G}_{k'}$, and where $\{W_{Q\setminus{\{k'\}},i}^{\mathcal{G}_{k'}(j)}\}_{j=1}^{L}$ is a set of intermediate values desired by the nodes in $\mathcal{G}_{k'}$. Each user $\mathcal{G}_{i}(j)$ now transmits the $j$-th element of the constructed vector $\mathbf{x}_{i,\mathcal{Q}\setminus{\{i\}}}$.

\paragraph{Decoding}
Node $\mathcal{G}_{p}(j), p\in \mathcal{Q}\setminus{\{i\}}$ receives the signal
\begin{equation}
y_{\mathcal{G}_{p}(j)}=\mathbf{h}^{T}_{\mathcal{G}_{p}(j)}\mathbf{x}_{i,\mathcal{Q}\setminus{\{i\}}}+w_{\mathcal{G}_{p}(j)}
\end{equation}
and removes out-of-group interference by employing the intermediate values it has computed locally in the map phase.
Specifically each node $\mathcal{G}_{p}(j)$, and all the nodes in $\mathcal{G}_{p}, p\in \mathcal{Q}$, remove from their $y_{\mathcal{G}_{p}(j)}$ the signal
\begin{equation}\mathbf{h}^{T}_{\mathcal{G}_{p}(j)}
\sum_{k' \in \mathcal{Q}\setminus{\{i,p\}}}\mathbf{H}^{-1}_{i,k'}
\begin{bmatrix}
W^{\mathcal{G}_{k'}(1)}_{Q\setminus{\{k'\}},i},  
\cdots,
W^{\mathcal{G}_{k'}(L)}_{Q\setminus{\{k'\}},i}
\end{bmatrix}^{T}
\end{equation}
to stay with a residual signal
\begin{equation}\label{eq:residual}
\mathbf{h}^{T}_{\mathcal{G}_{p}(j)}\mathbf{H}^{-1}_{\mathcal{G}_{i}, \mathcal{G}_{p}}\!
\begin{bmatrix}
W^{\mathcal{G}_{p}(1)}_{Q\setminus{\{p\}},i},
\cdots,
W^{\mathcal{G}_{p}(L)}_{Q\setminus{\{p\}},i}
\end{bmatrix}
^{T}
+w_{\mathcal{G}_{p}(j)}.
\end{equation}
By choosing $\mathbf{H}^{-1}_{\mathcal{G}_{i} , \mathcal{G}_{p}}$ to be a ZF precoder, removes intra-group interference, thus allowing each node $\mathcal{G}_{p}(j)$ to receive its desired intermediate value $W^{\mathcal{G}_{p}(j)}_{Q\setminus{\{p\}},i}$.  The shuffling phase is concluded by going over all the aforementioned sets $\mathcal{Q} \subset [K']$ of size $K'\gamma+1$.

\subsection{Reduce Phase}
At this point, each node uses the symbols received during the shuffling phase, together with the intermediate mapped values computed locally, in order to construct the inputs $W^{q}_{1},...,W^{q}_{S}$ that are required by the reduce function $r_{q}$ to calculate the desired output value $u_{q}=r_{q}(W^{q}_{1},\cdots ,W^{q}_{S})$.

\subsection{Calculation of Shuffling Delay}

We first see from \eqref{eq:subpacketization} that the subpacketization is, as stated, equal to
\begin{equation}\label{eq:subpacketization2}
S=K'\gamma\binom{K'}{K'\gamma} = \frac{K\gamma}{L}\binom{K/L}{K\gamma/L}.
\end{equation}

Let us now verify that the shuffling delay is $T^{GCMR}_{com} = \frac{1-\gamma}{\bar{t}_{L}}T_{c}$.
To do this, let us first assume that $S_{max} \geq S$ in which case we want to show that $T^{GCMR}_{com} = \frac{1-\gamma}{K\gamma}T_{c}$.
To verify the first term ($K\gamma$), we just need to note that during the shuffling phase no subfile is ever sent more than once, and then simply note that the scheme serves a total of $K'\gamma$ groups at a time, thus a total of $K'\gamma L = K\gamma$ nodes at a time. Finally to justify the term $1-\gamma$, we just need to recall that --- due to the placement redundancy --- a fraction $\gamma$ of all the shuffled data is already at their intended destination.

Lastly when $S_{max} \geq S$, we simply have to recall that we are allowed --- without violating the subpacketization constraint --- to encode over $\bar{K}_{L} = \arg\max_{K}\{\frac{K\gamma}{L}\binom{K/L}{K\gamma/L}\leq S_{max}\}$ nodes at a time, which yields the desired $\bar{t}_{L} = \gamma \cdot \bar{K}_{L}$.
This concludes the proof of the results.

\subsection{Extension to the Wired Setting}

As a last step, we quickly note that the same vector precoding used to separate the users of the same group (cf.~\eqref{eq:transmission1},\eqref{eq:residual}) can be directly applied in the wired setting where the intermediate nodes (routers, switches, etc.) in the links, can perform pseudo-random network coding operations on the received data (cf.~\cite{7580630}). This would then automatically yield a linear invertible relationship between the input vectors and the received signals, thus allowing for the design of the precoders that cancel intra-group interference.

 \begin{figure}[t!]
   \centering
   \includegraphics[width=0.28\textwidth]{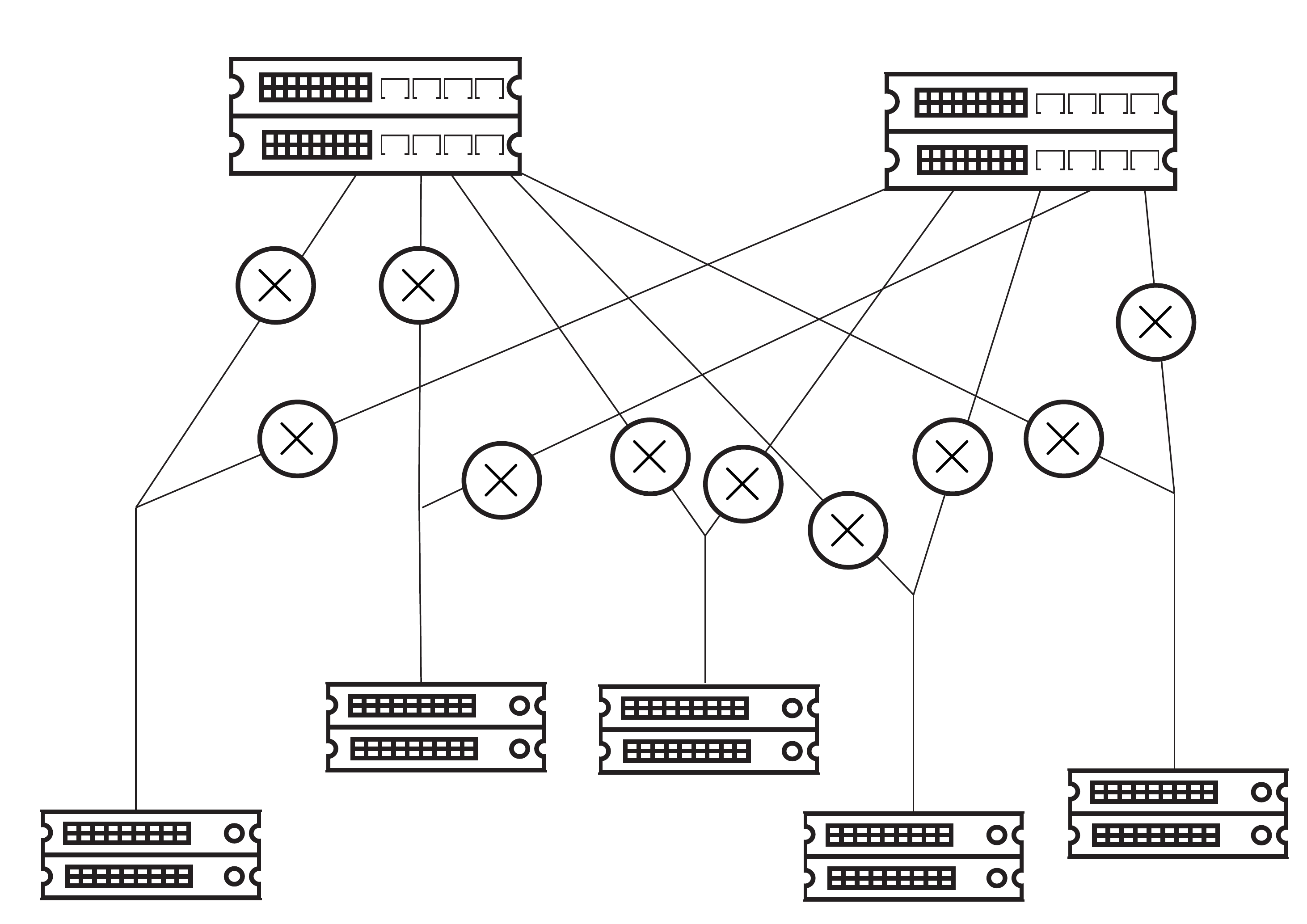}
   \caption{Illustration of the wired setting. $\times$ denotes a network coding operation.}
   \label{fig:wired}
 \end{figure}

\subsection{Example of the scheme}

Let us consider a setting with $K=32$ computing nodes, a chosen redundancy of $K\gamma=16$, and a cooperation parameter $L=8$. The nodes are split into $K/L=4$ groups
\begin{align*}
	\mathcal{G}_{1}=&\{1, 5, 9,  13, 17, 21, 25, 29\},\\[-2pt]
	\mathcal{G}_{2}=&\{2, 6, 10, 14, 18, 22, 26, 30\},\\[-2pt]
	\mathcal{G}_{3}=&\{3, 7, 11, 15, 19, 23, 27, 31\},\\[-2pt]
	\mathcal{G}_{4}=&\{4, 8, 12, 16, 20, 24, 28, 32\}
\end{align*}
and the dataset is split into $12$ packets as $\{W_{12,1},\allowbreak W_{12,2},\allowbreak W_{13,1},\allowbreak W_{13,3},\allowbreak W_{14,1},\allowbreak W_{14,4},\allowbreak W_{23,2},\allowbreak W_{23,3},\allowbreak W_{24,2},\allowbreak W_{24,4},\allowbreak W_{34,3},\allowbreak W_{34,4}\}$, which are distributed to the nodes of group $\mathcal{G}_{i}$ as follows:
\begin{align*}
	\mathcal{M}_{\mathcal{G}_{1}}=&\{W_{12,1}, W_{12,2}, W_{13,1}, W_{13,3}, W_{14,1}, W_{14,4}\}\\[-2pt]
	\mathcal{M}_{\mathcal{G}_{2}}=&\{W_{12,1}, W_{12,2}, W_{23,2}, W_{23,3}, W_{24,2}, W_{24,4}\}\\[-2pt]
	\mathcal{M}_{\mathcal{G}_{3}}=&\{W_{13,1}, W_{13,3}, W_{23,2}, W_{23,3}, W_{34,3}, W_{34,4}\}\\[-2pt]
	\mathcal{M}_{\mathcal{G}_{4}}=&\{W_{14,1}, W_{14,4}, W_{24,2}, W_{24,4}, W_{34,3}, W_{34,4}\}.
\end{align*}
In the map phase, each file $W_{\tau, \sigma}$ is mapped into $\{W^{q}_{\tau, \sigma}\}_{q=1}^{K}$ such that, for example, $W^{1}_{\tau, \sigma}$ is the output of the first mapping function after acting on $W_{\tau, \sigma}$. Finally the transmissions are\footnote{Please note that to keep the notation simple, the indices can often --- when there is not reason  for confusion --- appear without commas.}:
\begin{align*}
		\mathbf{x}_{1,23}=&\mathbf{H}_{12}^{-1}\mathbf{W}_{13,1}^{\mathcal{G}_{2}}+
	\mathbf{H}_{13}^{-1}\mathbf{W}_{12,1}^{\mathcal{G}_{3}}\\
		\mathbf{x}_{1,24}=&\mathbf{H}_{12}^{-1}\mathbf{W}_{14,1}^{\mathcal{G}_{2}}+
	\mathbf{H}_{14}^{-1}\mathbf{W}_{12,1}^{\mathcal{G}_{4}}\\
		\mathbf{x}_{1,34}=&\mathbf{H}_{13}^{-1}\mathbf{W}_{14,1}^{\mathcal{G}_{3}}+
	\mathbf{H}_{14}^{-1}\mathbf{W}_{13,1}^{\mathcal{G}_{4}}\\[3pt]
		\mathbf{x}_{2,13}=&\mathbf{H}_{21}^{-1}\mathbf{W}_{23,2}^{\mathcal{G}_{1}}+
	\mathbf{H}_{23}^{-1}\mathbf{W}_{12,2}^{\mathcal{G}_{3}}\\
		\mathbf{x}_{2,14}=&\mathbf{H}_{21}^{-1}\mathbf{W}_{24,2}^{\mathcal{G}_{1}}+
	\mathbf{H}_{24}^{-1}\mathbf{W}_{12,2}^{\mathcal{G}_{4}}\\
		\mathbf{x}_{2,34}=&\mathbf{H}_{23}^{-1}\mathbf{W}_{24,2}^{\mathcal{G}_{3}}+
	\mathbf{H}_{24}^{-1}\mathbf{W}_{23,2}^{\mathcal{G}_{4}}\\[3pt]
		\mathbf{x}_{3,12}=&\mathbf{H}_{31}^{-1}\mathbf{W}_{23,3}^{\mathcal{G}_{1}}+
	\mathbf{H}_{32}^{-1}\mathbf{W}_{13,3}^{\mathcal{G}_{2}}\\
		\mathbf{x}_{3,14}=&\mathbf{H}_{31}^{-1}\mathbf{W}_{34,3}^{\mathcal{G}_{1}}+
	\mathbf{H}_{34}^{-1}\mathbf{W}_{13,3}^{\mathcal{G}_{4}}\\
		\mathbf{x}_{3,24}=&\mathbf{H}_{32}^{-1}\mathbf{W}_{34,3}^{\mathcal{G}_{2}}+
	\mathbf{H}_{34}^{-1}\mathbf{W}_{23,3}^{\mathcal{G}_{4}}\\[3pt]
		\mathbf{x}_{4,12}=&\mathbf{H}_{41}^{-1}\mathbf{W}_{24,4}^{\mathcal{G}_{1}}+
	\mathbf{H}_{42}^{-1}\mathbf{W}_{14,4}^{\mathcal{G}_{2}}\\
		\mathbf{x}_{4,13}=&\mathbf{H}_{41}^{-1}\mathbf{W}_{34,4}^{\mathcal{G}_{1}}+
	\mathbf{H}_{43}^{-1}\mathbf{W}_{14,4}^{\mathcal{G}_{3}}\\
		\mathbf{x}_{4,23}=&\mathbf{H}_{42}^{-1}\mathbf{W}_{34,4}^{\mathcal{G}_{2}}+
	\mathbf{H}_{43}^{-1}\mathbf{W}_{24,4}^{\mathcal{G}_{3}},
\end{align*}
where $\mathbf{W}_{i,\tau}^{\mathcal{G}_{g}}$ denotes a vector of $L=8$ elements consisting of the intermediate values intended for nodes in group $\mathcal{G}_{g}$.

Observing for example the first transmission, we see that the nodes in group $\mathcal{G}_{2}$ can remove any interference caused by the intermediate values intended for group $\mathcal{G}_{3}$ since these intermediate values have been calculated by each node in $\mathcal{G}_{2}$ during the map phase. After noting that the precoding matrix $\mathbf{H}^{-1}_{12}$ removes intra-group interference, we can conclude that each transmission serves each of the $16$ users with one of their desired intermediate values, which in turn implies a $16$-fold speedup over the uncoded case.

\section{Conclusion}

The work provided a novel algorithm that employs node-grouping in the mapping and shuffling phases, to substantially reduce the shuffling-phase delays that had remained large due to the acute subpacketization bottleneck of distributed computing.

Among the most important contributions of this work is that, using node cooperation one, for the first time, can infinitely reduce the execution of these types of algorithms as long as there are enough computing nodes, something that previously wasn't possible in uncoded methods and while in coded methods it would reach a performance ceiling due to subpacketization constraints.

\subsection{Minimal overhead for group-based node cooperation}
It is interesting to note that the described node cooperation does not require any additional overhead communication of data (dataset entries) between the nodes. The only additional communication-overhead is that of having to exchange CSI between active receiving and transmitting nodes from $K\gamma/L+1$ groups. In static settings --- where computing nodes are not moving fast, as one might expect to happen in data centers --- and in particular in wired settings where the network coding coefficients are fixed and known, the CSI overhead can be very small compared to the volumes of the communicated datasets.

\subsection{Impact of reducing subpacketization on distributed computing}
We have have seen how extremely large subpacketization requirements can diminish the effect of coding in reducing the shuffling-phase delays. The proposed algorithm allows --- with minimal or no additional overhead --- for a dramatically reduced subpacketization, which comes with several positive ramifications.\setcounter{paragraph}{0} \paragraph{Boosting the Speedup-Factor $t$ in the Shuffling Phase} As we have discussed, the much reduced subpacketization allows for a substantial increase in the number of nodes we can encode over, thus potentially yielding an $L$-fold decrease in the shuffling-phase delay. The fact that a finite-sized dataset can only be divided into a finite number of subpackets, limits the values of parameter $t$ that can be achieved, because the corresponding subpacketization, which need be as high as $S=t\binom{K}{t}$, must be kept below some maximum subpacketization $S_{\max}$, which itself must be substantially less than the total number of elements $F$ in the dataset. When this number $S=t\binom{K}{t}$ exceeds the maximum allowable subpacketization $S_{\max}$, then what is often done is that coded communication is limited to include coding that spans only $\bar{K}$ users at a time (thus coded communication is repeated $K/\bar{K}$ times, for some $\bar{K}$ that satisfies $\bar{K}\gamma\binom{\bar{K}}{\bar{K}\gamma}\le S_{\max})$, thus resulting in a smaller, actual, gain $\bar{t}=\bar{K}\gamma<K\gamma$, which can be far below the theoretical communication gain from coding.

\paragraph{Reducing Packet Overheads} The second ramification from having fewer packets, comes in the form of reduced header overheads that accompany each transmission.
As the subpackets --- and thus their combinations --- become smaller and smaller, which means that the overhead ``headers'' that must accompany each transmission, will occupy a significant portion of the transmitted signal. Simply put, the more the subpackets, the smaller they are, hence the more the communication load is dominated by header overheads.
	\paragraph{Efficient Coded Message Creation by Reducing Unevenness} Another positive ramification from our algorithm is that it can reduce the unevenness between the sizes of the mapped outputs that each packet is mapped into. This unevenness --- which is naturally much more accentuated in smaller packets --- can cause substantial additional delays because it forces zero padding (we can only coombine equal-sized bit streams) which wastes communication resources. Having fewer and thus larger packets, averages out these size variations, thus reducing wasteful zero padding.

This can be better understood by using the Terasort and Coded Terasort framewoks \cite{o2008terabyte,7965073} for sorting $F$ numbers, by making use of $K=3$ nodes and having a chosen redundancy of $t=K\gamma=2$, but instead of assuming that each intermediate value has equal amount of elements, i.e., instead of assuming that $|W_i^1 |=|W_i^2 |=|W_i^3 |=1/3 |W_i |=F/18  ,i=1,2,…,6$, (recall that each of the $6$ subpackets has size $|W_i |=F/6$) we will instead assume that any intermediate value $W_1^3,W_2^3,W_3^3,W_4^3,W_5^3,W_6^3$ with upper index 3, will each occupy a fraction 1/2 of the elements of the respective subpacket (i.e., $|W_i^3 |=1/2 |W_i |=W/12  ,i=1,2,…,6$), while intermediate values with upper index $1$ or $2$ $(W_1^1,W_2^1,W_3^1,W_4^1,W_5^1,W_6^1)$ and $(W_1^2,W_2^2,W_3^2,W_4^2,W_5^2,W_6^2)$, will only have 1/4 of the elements of their respective subpacket each (i.e., $|W_i^1 |=1/4 |W_i |=F/24,i=1,2,…,6$), and $|W_i^2 |=1/4 |W_i |=F/24,~i=1,2,…,6$. In the case of uncoded placement, the corresponding delay would remain $(1-\gamma) T_c=(1-2/3) T_c=  1/3 T_c$ because there are no XORs, and because despite the unevenness, the total amount of information that must be communicated, remains the same. On the other hand, in the case of coded communication, having $|W_i^1 |=|W_2^1 |=1/4 |W_i |=F/12\neq |W_i^3 |=1/2 |W_i |=F/6$, in turn means that for every aforementioned XOR $x_1=W_1^2\oplus W_3^3, x_2=W_4^3\oplus W_5^1$ that includes some of the $W_{i}^{3}, i\in\{1,2,3\}$ elements inside, we would have to perform zero padding; for example, in the case of $x_2=W_4^3\oplus W_5^1$, we would have to zero pad $W_5^1$ to double its size, thus wasting resources. Now the three introduced XORs ($x_1=W_1^2\oplus W_3^3, x_2=W_4^3\oplus W_5^1, x_3=W_2^2\oplus W_6^1$) will have sizes $|x_1 |=|x_2 |=F/12, |x_3 |=F/24$, and thus sending all three would require a total delay of $T_c/12+T_c/12+T_c/24= 5T_c/24$. 

Comparing the above to the delay $1/3 T_c$ of the uncoded case, we can see that the multiplicative gain in the communication phase – due to coded communication \cite{7447112} – is limited to Gain = $(1/3)/(5/24)= 8/5=1.6$, instead of the theoretical gain of $t=2$.
On the other hand, by decreasing subpacketization, we automatically increase the size of these subpackets, thus decreasing – with high probability, due to the law of large numbers – the relative unevenness, which in turn allows for higher speedup gains.

\bibliographystyle{ieeetr}

\end{document}